\newtheorem{theorem}{Theorem}
\newtheorem{lemma}{Lemma}
\theoremstyle{definition}
\newtheorem{definition}{Definition}
\newtheorem{remark}{Remark}
\newtheorem{example}{Example}
\newcommand{\argmin}{\operatornamewithlimits{arg\,min}}
\newcommand{\R}{\mathbb{R}}
\newcommand{\mc}[1]{\mathcal{#1}}
\newcommand{\act}{\operatorname{Act}}
\newcommand{\bzero}{\mathbf{0}}
\newcommand{\bd}{\mathbf{d}}
\newcommand{\be}{\mathbf{e}}
\renewcommand{\bf}{\mathbf{f}} 
\newcommand{\bg}{\mathbf{g}}
\newcommand{\bk}{\mathbf{k}}
\newcommand{\bp}{\mathbf{p}}
\newcommand{\bq}{\mathbf{q}}
\newcommand{\bu}{\mathbf{u}}
\newcommand{\bv}{\mathbf{v}}
\newcommand{\bx}{\mathbf{x}}
\newcommand{\by}{\mathbf{y}}
\newcommand{\bz}{\mathbf{z}}
\newcommand{\bF}{\mathbf{F}}
\newcommand{\bR}{\mathbf{R}}
\newcommand{\bkappa}{\bm{\kappa}}
\newcommand{\bpi}{\bm{\pi}}
\newcommand{\bpsi}{\bm{\psi}}
\newcommand{\bomega}{\bm{\omega}}
\definecolor{myblue}{RGB}{49, 114, 174}
\definecolor{myred}{rgb}{0.796, 0.235, 0.2}
\definecolor{mygreen}{rgb}{0.22, 0.596, 0.149}
\definecolor{mypurple}{rgb}{0.584,0.345,0.698}
\title{\textbf{Safety-Critical Controller Synthesis with Reduced-Order Models}}
\author{Max H. Cohen$^1$, Noel Csomay-Shanklin$^1$, William D. Compton$^1$, Tamas G. Molnar$^2$, and Aaron D. Ames$^1$ %
\thanks{$^1$The authors are with the Department of Mechanical and Civil Engineering, California Institute of Technology, Pasadena, CA \texttt{\{maxcohen,noelcs,wcompton,ames\}@caltech.edu}.}
\thanks{$^2$ The author is with the Department of Mechanical Engineering, Wichita State University, Wichita, KS \texttt{\{tamas.molnar\}@wichita.edu}.}
\thanks{This research was supported by Boeing and NSF CPS Award \#1932091.}
}
\begin{document}
\maketitle
\begin{abstract}
    Reduced-order models (ROMs) provide lower dimensional representations of complex systems, capturing their salient features while simplifying control design. Building on previous work, this paper presents an overarching framework for the integration of ROMs and control barrier functions, enabling the use of simplified models to construct safety-critical controllers while providing safety guarantees for complex full-order models. To achieve this, we formalize the connection between full and ROMs by defining projection mappings that relate the states and inputs of these models and leverage simulation functions to establish conditions under which safety guarantees may be transferred from a ROM to its corresponding full-order model. The efficacy of our framework is illustrated through simulation results on a drone and hardware demonstrations on ARCHER, a 3D hopping robot.
\end{abstract}

\section{Introduction}
Control barrier functions (CBFs) \cite{AmesTAC17} have proven successful in designing safety-critical controllers for nonlinear systems. Despite this, developing general procedures for constructing CBFs for high-dimensional complex systems has remained elusive \cite{CohenARC24}. More recently, the authors have attempted to leverage reduced-order models (ROMs) 
to construct CBFs for simple models that may be refined to ensure the safety of more complex, full-order systems \cite{TamasRAL22,TamasACC23,CohenARC24}. This paradigm may be traced back to \cite{TamasRAL22} wherein simple kinematic models were used to generate safe velocity commands to be tracked by more complicated robotic systems. Such ideas were expanded on in \cite{TamasACC23} to address ROMs with bounded inputs and in \cite{CohenARC24} where this ROM paradigm is related to nonlinear control techniques such as backstepping \cite{Krstic}. In this paper, we unify and generalize previous developments \cite{TamasRAL22,TamasACC23,CohenARC24} to provide a formal framework for leveraging ROMs in the context of safety-critical control with CBFs.

The paradigm of safety-critical control with CBFs and ROMs is closely related to planner-tracker frameworks in which reduced-order planning models are used to generate trajectories that are tracked by full-order tracking models. The majority of planner-tracker methods focus on the construction of a tracking controller and associated tracking error bound using methods such as Hamilton-Jacobi reachability \cite{TomlinTAC21}, sum of squares programming \cite{ArcakARC22}, model predictive control \cite{BendersArXiv24}, or contraction theory \cite{SinghIJRR23}. 
Within the context of planner-tracker frameworks, we address a problem converse to those above: rather than designing a tracking controller for a given planning model, we focus on designing safe reduced-order reference commands for a full-order system with a fixed tracking controller. 
This is motivated by the observation that many systems of interest are equipped with high-performance, but ``black-box," tracking controllers that are not easily modified.
The canonical example is robotic systems, where one can often send ``joystick'' commands, without knowledge of the underlying tracking controller. 

\begin{figure}
    \centering
    \includegraphics{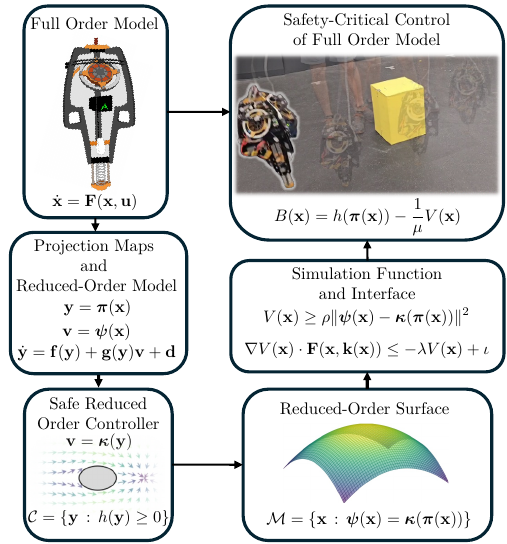}
    \vspace{-0.8cm}
    \caption{\textbf{Overview:} We project high-dimensional control systems onto reduced-order spaces, design safety-critical controllers for a reduced-order representation of the original system, and then relate the inputs of this reduced-order system back to the full-order system. Our theoretical developments are illustrated through their application to ARCHER, a 3D hopping robot, a video of which is available at \texttt{https://vimeo.com/1010060590?share=copy}.}
    \label{fig:hero-figure}
\end{figure}

The perspective taken herein also has roots in classical works from the hybrid systems community on abstraction-based control \cite{Tabuada,Belta}. In particular, our ROM paradigm is inspired by the notion of $\phi$-related control systems \cite{PappasTAC00,PappasTAC02} and we leverage approximate simulation relations \cite{GirardAutomatica09} to formalize the connection between full and reduced-order models. While these works offer a powerful framework to formally reason about system abstractions, their application has often been limited to simple, low-dimensional systems. Here, we expand the applicability of such ideas to more complex systems by illustrating how concepts such as simulation relations integrate with CBFs, leading to practical constructions of safety-critical controllers.

In this paper, we present a unified framework for ROMs in the context of safety-critical control with CBFs, generalizing the ideas in \cite{TamasRAL22,TamasACC23,CohenARC24}. Specifically, we introduce projection mappings that relate the states and inputs of full-order and reduced-order models, which allow for transferring properties of a ROM back to its corresponding full-order model. A key assumption enabling our approach is the existence of a simulation function and corresponding interface \cite{GirardAutomatica09} that refines reduced-order inputs to those for the full-order model.
At a high level, this assumption formalizes the capability of a full-order model to track commands generated by a suitable ROM.
While this may seem restrictive, we argue that for many practical systems, it is not: drones and other unmanned aerial vehicles often come equipped with well-designed tracking controllers \cite{LeeCDC10,MellingerICRA2011}, whereas state-of-the-art methods in robotic locomotion rely on converting velocity references into joint torques using model predictive control \cite{WensingTRO24} or reinforcement learning \cite{HutterScience19}. Rather than redesigning the control architecture of such systems to incorporate safety considerations, the perspective taken herein is to leverage these existing architectures by passing them suitably designed reduced-order inputs, which, as we will demonstrate experimentally, leads to practical safety guarantees.

Our approach generalizes previous results on CBFs and ROMs \cite{TamasRAL22,TamasACC23,CohenARC24} as follows. We consider a larger class of FOMs and ROMs than those in \cite{TamasRAL22,CohenARC24}, which focused on fully actuated robotic systems and strict-feedback systems, respectively. We also provide a Lyapunov-like characterization of tracking controllers via simulation functions \cite{GirardAutomatica09}, which, compared to \cite{TamasACC23}, yields time-invariant safe sets rather than time-varying safe sets and relaxes other conditions such as only using CBFs with bounded gradients. Furthermore, we showcase the efficacy of our framework on ARCHER \cite{Ambrose,NoelICRA23,NoelIROS24}, a highly underactuated 3D hopping robot, advancing the practical applications of ROMs within a CBF framework. 

\section{Preliminaries and Problem Formulation}\label{sec:prelim}
\noindent\textbf{Notation:} A continuous function $\alpha\,:\,\R_{\geq0}\rightarrow\R_{\geq0}$ is said to be a class $\mc{K}$ function ($\alpha\in\mc{K}$) if $\alpha(0)=0$ and $\alpha$ is strictly increasing. For a set $\mc{S}$ we use $\partial\mc{S}$ to denote its boundary. A real number $a\in\R$ is said to be a regular value of a scalar function $h\,:\,\R^n\rightarrow\R$ if $h(\bx)=a$ implies $\nabla h(\bx)\neq\bzero$.
\vspace{0.1cm}

\noindent\textbf{Safety:} Consider a system with state $\bx\in\R^n$ and dynamics:
\begin{equation}\label{eq:dyn}
    \dot{\bx} = \bf(\bx),
\end{equation}
where $\bf\,:\,\R^n\rightarrow\R^n$ is a locally Lipschitz vector field. Under this Lipschitz assumption, for each initial condition $\bx_0\in\R^n$ the dynamics \eqref{eq:dyn} generate a unique continuously differentiable trajectory $\bx\,:\, I(\bx_0)\rightarrow\R^n$ satisfying \eqref{eq:dyn} on some maximal interval of existence $I(\bx_0)\subseteq\R_{\geq0}$. A set $\mc{S}\subset\R^n$ is said to be forward invariant for \eqref{eq:dyn} if for each initial condition $\bx_0\in\mc{S}$, the resulting trajectory $t\mapsto\bx(t)$ satisfies $\bx(t)\in\mc{S}$ for all $t\in I(\bx_0)$. The following result, known as Nagumo's Theorem, provides necessary and sufficient conditions for set invariance.

\begin{theorem}\label{theorem:nagumo}
    A closed set $\mc{S}\subset\R^n$ is forward invariant for \eqref{eq:dyn} if and only if $\bf(\bx)\in\mc{T}_{\mc{S}}(\bx)$ for all $\bx\in\partial\mc{S}$, where $\mc{T}_{\mc{S}}(\bx)$ denotes the contingent cone\footnote{See \cite[Ch. 4]{Blanchini} for a precise definition of the contingent cone.} to $\mc{S}$ at $\bx$.
\end{theorem}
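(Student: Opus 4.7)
The plan is to prove the two implications separately. Necessity is the easy direction. Assuming $\mc{S}$ is forward invariant, fix $\bx \in \partial \mc{S}$ and consider the trajectory $\bx(\cdot)$ with $\bx(0) = \bx$, which remains in $\mc{S}$ on $I(\bx)$. For any sequence $t_k \downarrow 0$, the vectors $\bv_k := (\bx(t_k) - \bx)/t_k$ satisfy $\bx + t_k \bv_k = \bx(t_k) \in \mc{S}$ and converge to $\dot{\bx}(0) = \bf(\bx)$. By the sequential characterization of the contingent cone, $\bf(\bx) \in \mc{T}_\mc{S}(\bx)$.

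For sufficiency, I would argue by contradiction: suppose some trajectory with $\bx_0 \in \mc{S}$ leaves $\mc{S}$, and set $\tau := \sup\{t \geq 0 \,:\, \bx(s) \in \mc{S}\ \forall s \in [0,t]\}$. Closedness of $\mc{S}$ and continuity of $\bx(\cdot)$ give $\bx(\tau) \in \partial \mc{S}$. Introduce the distance function $V(t) := d_\mc{S}(\bx(t))$ with $d_\mc{S}(\by) := \inf_{\bz \in \mc{S}} \|\by - \bz\|$; this $V$ is Lipschitz in $t$ and satisfies $V(\tau) = 0$. The goal is to bound the upper-right Dini derivative $D^+V(\tau)$ by zero, preventing $V$ from becoming positive after $\tau$. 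The tangency $\bf(\bx(\tau)) \in \mc{T}_\mc{S}(\bx(\tau))$ supplies sequences $h_k \downarrow 0$ and $\bv_k \to \bf(\bx(\tau))$ with $\bx(\tau) + h_k \bv_k \in \mc{S}$; combining this with the Taylor expansion $\bx(\tau + h) = \bx(\tau) + h\bf(\bx(\tau)) + o(h)$ and the $1$-Lipschitz property of $d_\mc{S}$ yields $V(\tau + h_k) = o(h_k)$ along the subsequence.

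The main obstacle is upgrading this subsequential estimate to a bound on $D^+V(t)$ valid for all small $h > 0$ and for $t$ in a full neighborhood of $\tau$. I would close this gap by applying the tangency hypothesis at every nearby boundary point, projecting $\bx(\tau + h)$ onto $\mc{S}$, and invoking local Lipschitz continuity of $\bf$ to derive a differential inequality of the form $D^+V(t) \leq L\, V(t)$ along the trajectory; Gr\"onwall's lemma would then force $V \equiv 0$ past $\tau$, contradicting the assumed escape. Carefully executing this viability-theoretic step, where one must handle the one-sided limits inherent in the contingent cone, is the technical heart of the argument; standard treatments appear in Aubin's viability theory or Blanchini's monograph cited in the footnote.
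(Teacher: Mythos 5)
The paper itself offers no proof of this statement: Nagumo's Theorem is quoted as a classical result, with the details deferred to \cite[Ch.~4]{Blanchini} and \cite[Ch.~4]{AbrahamMarsdenRatiu}. So your attempt must be judged on its own. The necessity direction is complete and correct: the forward difference quotients $(\bx(t_k)-\bx)/t_k$ lie along points of $\mc{S}$ and converge to $\bf(\bx)$, which is precisely the sequential characterization of $\mc{T}_{\mc{S}}(\bx)$. The sufficiency direction, however, is a roadmap rather than a proof: you correctly identify the distance function $V(t)=d_{\mc{S}}(\bx(t))$ and the target inequality $D^+V\le LV$, but you explicitly defer the ``technical heart,'' so as written there is a genuine gap.

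That gap is smaller than you think, and the obstacle you name is not actually an obstacle. You worry about upgrading the subsequential estimate to a bound on the \emph{upper} right Dini derivative valid for all small $h$; but the comparison/monotonicity lemma for continuous functions requires only the \emph{lower} right Dini derivative $D_+V(t)=\liminf_{h\downarrow 0}\bigl(V(t+h)-V(t)\bigr)/h$, and a subsequential estimate is exactly a statement about this $\liminf$. Concretely: for $t$ with $V(t)>0$ and $\bx(t)$ still in a compact neighborhood on which $\bf$ has Lipschitz constant $L$, pick a projection $\bz$ of $\bx(t)$ onto the closed set $\mc{S}$; then $\bz\in\partial\mc{S}$, so the hypothesis yields $h_k\downarrow 0$ with $d_{\mc{S}}(\bz+h_k\bf(\bz))=o(h_k)$, and the $1$-Lipschitz property of $d_{\mc{S}}$ together with $\bx(t+h)=\bx(t)+h\bf(\bx(t))+o(h)$ and $\|\bf(\bx(t))-\bf(\bz)\|\le L\|\bx(t)-\bz\|=LV(t)$ gives $V(t+h_k)\le V(t)+h_k LV(t)+o(h_k)$, hence $D_+V(t)\le LV(t)$ pointwise; Gr\"onwall for Dini derivatives then forces $V\equiv 0$, contradicting escape. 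Note also that the cone condition must be invoked at the projection point $\bz$ (a boundary point near the trajectory), not only at $\bx(\tau)$ --- your plan does say this, and it is essential. With the $\liminf$ observation and a routine localization ensuring projections exist and remain in the neighborhood where $L$ is valid, your outline closes into the standard textbook proof.
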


Informally, Nagumo's Theorem states that a set is forward invariant if and only if, for each $\bx\in\partial\mc{S}$, the vector field characterizing the dynamics points into $\mc{S}$. Further details of Nagumo's Theorem can be found in \cite[Ch. 4]{Blanchini} and \cite[Ch. 4]{AbrahamMarsdenRatiu}. In this paper, we associate the concept of set invariance with that of safety: a system is safe if it remains in a desirable subset of the state space. In what follows, our main objective is to design safety-critical controllers for complex high-dimensional systems using relatively simple, or, reduced-order, models within the framework of CBFs \cite{CohenARC24}.

\section{Reduced-Order Models}\label{sec:rom}
Reduced-order models (ROMs) provide lower-dimensional representations of systems, capturing the salient features of more complex models while simplifying controller design. To formalize this idea, consider a control system:
\begin{equation}\label{eq:full-order-dyn}
    \dot{\bx} = \bF(\bx,\bu),
\end{equation}
with state $\bx\in\R^N$, input $\bu\in\R^M$ and locally Lipschitz dynamics $\bF\,:\,\R^N\times\R^M\rightarrow\R^N$,
which we refer to as the full-order model (FOM). To obtain a reduced-order representation of \eqref{eq:full-order-dyn}, we define 
a differentiable \emph{state projection map} $\bpi\,:\,\R^N\rightarrow\R^n$ and a \emph{control projection map} $\bpsi\,:\,\R^N\rightarrow\R^m$, which map the full-order state $\bx\in\R^N$ to a reduced-order state $\by\in\R^n$ and input $\bv\in\R^m$ as:
\begin{equation}
    \by \coloneqq \bm{\pi}(\bx),\quad \bv \coloneqq \bm{\psi}(\bx).
\end{equation}
This state projection map $\bpi$ allows for defining the dynamics of the FOM \eqref{eq:full-order-dyn} projected onto the reduced-order space:
\begin{equation}\label{eq:projected-reduced-order-model}
    \dot{\by} = \pdv{\bpi}{\bx}(\bx)\bF(\bx,\bu).
\end{equation}
While \eqref{eq:projected-reduced-order-model} provides a reduced-order representation of \eqref{eq:full-order-dyn}, it depends on the full-order states and inputs, complicating the design of a reduced-order controller.
We resolve this by defining \emph{idealized} reduced-order dynamics, characterized by locally Lipschitz functions $\bf\,:\,\R^n\rightarrow\R^n$ and $\bg\,:\,\R^n\rightarrow\R^{n\times m}$ that are used to rewrite \eqref{eq:projected-reduced-order-model} as:
\begin{equation}\label{eq:true-reduced-order-model}
    \begin{aligned}
        \dot{\by} = & 
        \bf(\by) + \bg(\by)\bv + \bd,
    \end{aligned}
\end{equation}
where $\bd\coloneqq \pdv{\bpi}{\bx}(\bx)\bF(\bx,\bu)-\bf(\by) -\bg(\by)\bv$ captures the discrepancy between \eqref{eq:projected-reduced-order-model} and the idealized reduced-order dynamics. Ideally, one would choose $\bpi$ and $\bpsi$ such that:
\begin{equation*}
    \pdv{\bpi}{\bx}(\bx)\bF(\bx,\bu) =\bf(\bpi(\bx)) +\bg(\bpi(\bx))\bpsi(\bx),
\end{equation*}
for all $(\bx,\bu)\in\R^N\times\R^M$ so that $\bd\equiv\bzero$, although we will not impose this as a strict requirement. Hereafter, we refer to \eqref{eq:true-reduced-order-model} as a ROM of \eqref{eq:full-order-dyn}. 
To make the preceding developments more concrete, we introduce the following running example.

\begin{example}\label{example:quadrotor-intro}
    Consider a full-order model of a quadrotor from \cite{CosnerICRA24} with state $\bx=(\bp,\bq,\dot{\bp})\in\R^3\times\mathbb{S}^3\times\R^3=\mc{X}$, where $\bp=(x,y,z)\in\R^3$ is the position, $\bq\in\mathbb{S}^3$ is the orientation represented as a unit quaternion, and $\dot{\bp}=(\dot{x},\dot{y},\dot{z})\in\R^3$ is the velocity. The dynamics of the quadrotor are given by:
    \begin{equation}\label{eq:quad-dyn}
    \underbrace{
        \begin{bmatrix}
            \dot{\bp} \\ \dot{\bq} \\ \ddot{\bp}
        \end{bmatrix}}_{\dot{\bx}}
        =
        \underbrace{
        \begin{bmatrix}
            \dot{\bp} \\ \bm{\omega} \\ -\be_{z}g + \frac{1}{m}\bR(\bq)\be_{z}\tau
        \end{bmatrix}}_{\bF(\bx,\bu)},
    \end{equation}
    where the full-order input $\bu=(\bm{\omega},\tau)\in\mathfrak{s}^3\times \R$ is the angular velocity $\bm{\omega}$ and thrust $\tau$.
    Our objective is to control the quadrotor as if it were a two-dimensional single integrator evolving in the plane. To this end, we define
    $\by=\bpi(\bx)\coloneqq (x,y)\in\R^2$, $\bv=\bpsi(\bx)=(\dot{x},\dot{y})\in\R^2$ noting the dynamics of the FOM projected onto the reduced-order space are:
    \begin{equation*}
        \underbrace{
        \begin{bmatrix}
            \dot{x} \\ \dot{y}    
        \end{bmatrix}}_{\dot{\by}}
        =
        \underbrace{
        \begin{bmatrix}
            0 \\ 0
        \end{bmatrix}}_{\bf(\by)}
        +
        \underbrace{
        \begin{bmatrix}
            1 & 0 \\ 0 & 1
        \end{bmatrix}}_{\bg(\by)}
        \underbrace{
        \begin{bmatrix}
            \dot{x} \\ \dot{y}
        \end{bmatrix}}_{\bv},
    \end{equation*}
    which matches the idealized single integrator dynamics (i.e., these projections produce a ROM \eqref{eq:true-reduced-order-model} with $\bd\equiv\bzero$). 
\end{example}

Our main objective in this paper is to design safety-critical controllers $\bkappa\,:\,\R^n\rightarrow\R^m$ for the ROM \eqref{eq:true-reduced-order-model}, which are then refined to ensure safety of the FOM \eqref{eq:full-order-dyn}. To relate the properties of a reduced-order controller $\bkappa$ to \eqref{eq:full-order-dyn}, we define the \emph{reduced-order surface}:
\begin{equation}\label{eq:M0}
    \mc{M} \coloneqq \{\bx\in\R^N\,:\, \bpsi(\bx) - \bkappa(\bpi(\bx)) = \bzero\}.
\end{equation}
Constraining \eqref{eq:full-order-dyn} to $\mc{M}$ along a given trajectory $t\mapsto \bx(t)$ ensures that $\bpsi(\bx(t)) = \bkappa(\bpi(\bx(t)))$ for all $t\geq 0$ so that properties of the reduced-order controller $\bkappa$ may be transferred back to states of the FOM. 
While we will not impose the rather strict assumption that $\mc{M}$ be rendered forward invariant, we will assume that it is possible to drive the FOM to a neighborhood of $\mc{M}$, which is captured by the notion of a simulation function \cite{GirardAutomatica09}, slightly modified to suite the context of this paper.
\begin{definition}\label{def:simulation}
    A continuously differentiable function $V\,:\,\mc{D}\subseteq\R^N\rightarrow\R_{\geq0}$ is said to be a \emph{simulation function} from the ROM \eqref{eq:true-reduced-order-model} to the FOM \eqref{eq:full-order-dyn} with an associated locally Lipschitz interface $\bk\,:\,\R^N\rightarrow\R^M$ if there exists $\rho>0$ such that:
    \begin{equation}\label{eq:lyap1}
    \begin{aligned}
        V(\bx) \geq & \rho \|\bpsi(\bx) - \bkappa(\bpi(\bx))\|^2,
    \end{aligned}
    \end{equation}
    \begin{equation}\label{eq:lyap2}
        \nabla V(\bx)\cdot\bF(\bx,\bk(\bx)) \leq - \lambda V(\bx) + \iota,
    \end{equation}
    for all $\bx\in\mc{D}$, where $\lambda>0$ and $\iota\geq0$ satisfy $\beta> \iota/\lambda$ with:
    \begin{equation*}
        \Omega_{\beta}\coloneqq \{\bx\in\mc{D}\,:\,V(\bx) \leq \beta\},
    \end{equation*}
    the largest sublevel set of $V$ contained within $\mc{D}$.
\end{definition}

\begin{lemma}\label{lemma:simulation}
    Let $V\,:\,\mc{D}\rightarrow\R_{\geq0}$ be a simulation function from \eqref{eq:true-reduced-order-model} to \eqref{eq:full-order-dyn} with an associated interface $\bk\,:\,\R^N\rightarrow\R^M$. Then, for any initial condition $\bx_0\in\Omega_{\beta}$, trajectories $t\mapsto\bx(t)$ of closed-loop full-order system $\dot{\bx}=\bF(\bx,\bk(\bx))$ satisfy:
    \begin{equation}
        \|\bpsi(\bx(t)) - \bkappa(\bpi(\bx(t)))\|^2 \leq \frac{V(\bx_0) }{\rho}e^{-\lambda t}+ \frac{\iota}{\lambda \rho},\; \forall t\in I(\bx_0).
    \end{equation}
\end{lemma}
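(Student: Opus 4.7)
The plan is to apply a standard Lyapunov--comparison argument to the dissipation inequality \eqref{eq:lyap2} and then convert the resulting exponential bound on $V$ into a bound on the tracking error via the quadratic lower bound \eqref{eq:lyap1}. The only real subtlety---and the step I expect to be the main obstacle---is establishing that the trajectory never leaves the domain $\mc{D}$ on which the simulation-function inequalities are defined, so that \eqref{eq:lyap2} can be integrated uniformly over $I(\bx_0)$. The hypothesis $\beta>\iota/\lambda$ is exactly what enables this.

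First, I would show forward invariance of the sublevel set $\Omega_\beta$ under the closed-loop full-order dynamics $\dot{\bx}=\bF(\bx,\bk(\bx))$. On the boundary $\partial\Omega_\beta$ one has $V(\bx)=\beta$, and \eqref{eq:lyap2} gives $\nabla V(\bx)\cdot\bF(\bx,\bk(\bx))\leq -\lambda\beta+\iota<0$. Since $V$ is continuously differentiable, this strict inequality forces $\bF(\bx,\bk(\bx))$ into the contingent cone $\mc{T}_{\Omega_\beta}(\bx)$ at every $\bx\in\partial\Omega_\beta$, so Nagumo's theorem (Theorem~\ref{theorem:nagumo}) yields forward invariance of $\Omega_\beta$. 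Consequently, for any $\bx_0\in\Omega_\beta$, the trajectory $t\mapsto\bx(t)$ remains in $\Omega_\beta\subseteq\mc{D}$ throughout $I(\bx_0)$, so \eqref{eq:lyap2} holds along the entire solution.

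With the trajectory confined to $\mc{D}$, differentiating $t\mapsto V(\bx(t))$ along the closed-loop flow and applying \eqref{eq:lyap2} produces the scalar differential inequality $\dot V(\bx(t))\leq -\lambda V(\bx(t))+\iota$. Multiplying by the integrating factor $e^{\lambda t}$ and integrating from $0$ to $t$ (equivalently, invoking the Comparison Lemma) gives
\[
V(\bx(t))\leq V(\bx_0)e^{-\lambda t}+\frac{\iota}{\lambda}\bigl(1-e^{-\lambda t}\bigr)\leq V(\bx_0)e^{-\lambda t}+\frac{\iota}{\lambda}.
\]
Dividing by $\rho$ and applying the quadratic lower bound \eqref{eq:lyap1} to the left-hand side immediately yields the claimed bound on $\|\bpsi(\bx(t))-\bkappa(\bpi(\bx(t)))\|^2$. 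Everything beyond the invariance argument is routine scalar ODE comparison, so the real content of the proof lies in the domain-confinement step above.
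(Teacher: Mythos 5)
Your proof is correct and follows essentially the same route as the paper, which simply invokes the Comparison Lemma \cite[Lemma 3.4]{Khalil} and then applies \eqref{eq:lyap1}. The forward-invariance argument for $\Omega_{\beta}$ that you supply up front is a detail the paper leaves implicit in this lemma (it reappears inside the proof of Theorem \ref{theorem:main} via the function $B_{\beta}$), and it is a welcome addition since \eqref{eq:lyap2} is only guaranteed on $\mc{D}$.
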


The proof of Lemma \ref{lemma:simulation} is a direct consequence of the Comparison Lemma \cite[Lemma 3.4]{Khalil}. The following result will be useful when discussing set invariance properties. 

\begin{lemma}\label{lemma:beta-regular-value}
    If $V\,:\,\R^N\rightarrow\R_{\geq0}$ is a simulation function from \eqref{eq:true-reduced-order-model} to \eqref{eq:full-order-dyn} with interface $\bk$, then $\beta$ is a regular value of $V$.
\end{lemma}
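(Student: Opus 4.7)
The plan is a short proof by contradiction, exploiting the gap between $\beta$ and $\iota/\lambda$ guaranteed by Definition~\ref{def:simulation}. Suppose, toward a contradiction, that $\beta$ is not a regular value of $V$. Then by the definition of regular value stated in the notation section, there exists some $\bx^\star\in\R^N$ with $V(\bx^\star)=\beta$ and $\nabla V(\bx^\star)=\bzero$.

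Next, I would evaluate the decrease condition \eqref{eq:lyap2} at this point $\bx^\star$. Because $\nabla V(\bx^\star)=\bzero$, the left-hand side vanishes regardless of the value of $\bF(\bx^\star,\bk(\bx^\star))$, so \eqref{eq:lyap2} reduces to
\begin{equation*}
    0 \;=\; \nabla V(\bx^\star)\cdot\bF(\bx^\star,\bk(\bx^\star)) \;\leq\; -\lambda V(\bx^\star) + \iota \;=\; -\lambda\beta + \iota.
\end{equation*}
Rearranging yields $\beta\leq \iota/\lambda$, directly contradicting the standing hypothesis $\beta>\iota/\lambda$ in Definition~\ref{def:simulation}. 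Hence no such $\bx^\star$ can exist, and $\beta$ is a regular value of $V$.

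The argument is essentially one line once the right point is chosen, so there is no real obstacle beyond recognizing that the gap $\beta>\iota/\lambda$ is precisely what rules out critical points on the level set $\{V=\beta\}$. The only subtlety to flag is that the lemma requires the domain of $V$ to be all of $\R^N$ (rather than a strict subset $\mc{D}$), which ensures that any $\bx^\star$ with $V(\bx^\star)=\beta$ is an interior point where $\nabla V$ is defined and where \eqref{eq:lyap2} applies; this is why the hypothesis is stated with domain $\R^N$ instead of $\mc{D}$.
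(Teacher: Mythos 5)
Your proposal is correct and follows essentially the same argument as the paper: assume $\beta$ is not regular, locate a point on the level set $\{V=\beta\}$ where $\nabla V$ vanishes, and use \eqref{eq:lyap2} to derive $\beta\leq\iota/\lambda$, contradicting the requirement $\beta>\iota/\lambda$ in Definition~\ref{def:simulation}. Your closing remark about the domain being $\R^N$ rather than a strict subset $\mc{D}$ is a reasonable observation that the paper does not make explicit, but it does not change the substance of the argument.
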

\begin{proof}
    For the sake of contradiction assume $\beta$ is not a regular value of $V$. Then, when $V(\bx)=\beta$ we have $\nabla V(\bx)=\bzero$ and $\nabla V(\bx)\cdot\bF(\bx,\bk(\bx))=0$, which, by \eqref{eq:lyap2}, implies that:
    \begin{equation*}
         0 \leq - \lambda V(\bx) + \iota = -\lambda\beta + \iota < -\lambda\frac{\iota}{\lambda} + \iota = 0,
    \end{equation*}
    where the final inequality follows from the fact that $\beta > \iota/\lambda$. As we cannot have $0<0$, the above contradicts the initial claim that $\beta$ is not a regular value of $V$. Hence, by contradiction, $\beta$ must be a regular value of $V$.
\end{proof}

The existence of a simulation function and associated interface allows for refining inputs for a ROM to those for the FOM. When implemented on the FOM, this interface ensures that states within the control projection map $\bpsi(\bx)$ remain close to the inputs generated by the ROM controller $\bkappa(\bpi(\bx))$, allowing to relate properties of the reduced-order controller $\bkappa$ to the full-order state $\bx$.

\section{Safety-Critical Control with ROMs}\label{sec:safety}
We now discuss how the preceding framework may be used to design safety-critical controllers for complex systems based on their corresponding ROMs. To this end, consider a state constraint set for a FOM:
\begin{align}\label{eq:C}
    \mc{C} \coloneqq & \{\bx\in\R^N\,:\,h(\bpi(\bx))\geq 0\},
\end{align}
where $h\,:\,\R^n\rightarrow\R$ is continuously differentiable. While $\mc{C}$ exists in the full-order space, it only depends on the states related to the ROM. The derivative of $h$ along the full-order dynamics is given by:
\begin{equation}\label{eq:h-dot}
\begin{aligned}
    \dot{h} = & \nabla h(\bpi(\bx))\cdot \pdv{\bpi}{\bx}(\bx)\bF(\bx,\bu) \\
    = & L_{\bf}h(\by) + L_{\bg}h(\by)\bv + \nabla h(\by)\cdot \bd.
\end{aligned}
\end{equation}
Similar to backstepping \cite{Krstic}, we view $\bv=\bpsi(\bx)$, the input to the ROM, as a ``virtual" control input and design a controller $\bkappa\,:\,\R^n\rightarrow\R^m$ for the ROM that would enforce forward invariance of $\mc{C}$, provided its dynamics were directly controllable. Since the inputs to the ROM are not the same as those of the FOM, we then relate the inputs of the ROM $\bv=\bkappa(\by)$ to those of the FOM \eqref{eq:full-order-dyn} via a simulation function $V$ and interface $\bk$. To design the ROM controller, we leverage CBFs: suppose there exists a locally Lipschitz controller $\bkappa\,:\,\R^n\rightarrow\R^m$ satisfying:
\begin{equation}\label{eq:issf-rom}
    \begin{aligned}
        L_{\bf}h(\by) + L_{\bg}h(\by)\bkappa(\by) > & - \alpha h(\by) + \frac{1}{\varepsilon}\|L_{\bg}h(\by)\|^2 \\ &  + \frac{1}{\sigma} \|\nabla h(\by)\|^2
    \end{aligned}
\end{equation}
for all $\by\in\R^n$, where $\alpha,\varepsilon,\sigma>0$. The above condition effectively requires $h$ to be an \emph{input-to-state safe} CBF \cite{AnilTCST23} for the ROM \eqref{eq:true-reduced-order-model}, where the last two terms in \eqref{eq:issf-rom} are included to compensate for transient tracking errors $\|\bpsi(\bx) - \bkappa(\bpi(\bx))\|$ of the interface $\bk$ from Def. \ref{def:simulation} and to compensate for $\bd$ from \eqref{eq:true-reduced-order-model}. We now combine $h$ with a simulation function $V$ to form the barrier function candidate for the FOM:
\begin{equation}\label{eq:B}
    B(\bx) = h(\bpi(\bx)) - \frac{1}{\mu} V(\bx),
\end{equation}
where $\mu\in\R_{>0}$,
which is used to define a candidate safe set:
\begin{equation}\label{eq:S2}
\begin{aligned}
    \mc{S} \coloneqq & \{\bx\in\R^N\,:\,B(\bx) \geq 0\},
\end{aligned}
\end{equation}
for the FOM \eqref{eq:full-order-dyn}. Since $V(\bx)\geq0$, we have $B(\bx)\geq 0 \implies h(\bpi(\bx))\geq0$ so that rendering $\mc{S}$ forward invariant leads to satisfaction of the state constraint in \eqref{eq:C}. 
Before proceeding, we note that with $\bu=\bk(\bx)$,  $\bd$ may be written as:
\begin{equation}\label{eq:d-X}
\begin{aligned}
    \bd(\bx) = & \pdv{\bpi}{\bx}(\bx)\bF(\bx,\bk(\bx))
    - \bf(\bpi(\bx)) - \bg(\bpi(\bx))\bpsi(\bx).
\end{aligned}
\end{equation}
The following theorem constitutes the main result of this paper and illustrates that when there exists a simulation function from the ROM to the FOM, then one may refine reduced-order controllers to ensure safety of the FOM.

\begin{theorem}\label{theorem:main}
    Consider the FOM \eqref{eq:full-order-dyn}, the ROM \eqref{eq:true-reduced-order-model}, the set $\mc{C}\subset\R^n$ as in \eqref{eq:C}, and suppose there exists a simulation function $V\,:\,\mc{D}\rightarrow\R_{\geq0}$ from \eqref{eq:true-reduced-order-model} to \eqref{eq:full-order-dyn} with an associated locally Lipschitz interface $\bk\,:\,\R^N\rightarrow\R^M$. Define:
    \begin{equation}
        \mc{S}_{\delta} \coloneqq \left\{\bx\in\R^N\,:\,B(\bx) + \frac{1}{\alpha}\left(\frac{\sigma}{4}\delta^2 + \frac{\iota}{\mu} \right) \geq 0 \right\},
    \end{equation}
    where $B$ is defined as in \eqref{eq:B} and $\delta\coloneqq \sup_{\bx\in\Omega_{\beta}}\bd(\bx)$ with $\bd$ as in \eqref{eq:d-X}. Provided that:
    \begin{equation}\label{eq:gain-conditions}
        \lambda \geq \alpha + \frac{\varepsilon\mu}{4\rho},
    \end{equation}
    then $\mc{W}\coloneqq \mc{S}_{\delta}\cap\Omega_{\beta}$ is forward invariant for the closed-loop full-order system \eqref{eq:full-order-dyn} with $\bu=\bk(\bx)$. 
\end{theorem}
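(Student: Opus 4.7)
My plan is to invoke Nagumo's theorem (Theorem \ref{theorem:nagumo}) applied to the closed set $\mc{W}=\mc{S}_{\delta}\cap\Omega_{\beta}$, which requires showing that the vector field $\bF(\bx,\bk(\bx))$ lies in the contingent cone $\mc{T}_{\mc{W}}(\bx)$ at every boundary point. The boundary of $\mc{W}$ splits into two pieces, $\partial\mc{S}_{\delta}\cap\Omega_{\beta}$ and $\mc{S}_{\delta}\cap\partial\Omega_{\beta}$, which I would handle separately.

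For the $\partial\Omega_{\beta}$ piece, I would use \eqref{eq:lyap2} together with the definition of $\beta$: on $\partial\Omega_{\beta}$ we have $V(\bx)=\beta$, so $\nabla V(\bx)\cdot\bF(\bx,\bk(\bx))\leq -\lambda\beta+\iota<0$ since $\beta>\iota/\lambda$. Lemma \ref{lemma:beta-regular-value} guarantees $\beta$ is a regular value of $V$, so the contingent cone to $\Omega_{\beta}$ is the half-space $\{\bv : \nabla V(\bx)\cdot\bv\leq 0\}$, which the vector field satisfies strictly.

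The bulk of the work is on $\partial\mc{S}_{\delta}$. I would first differentiate $B$ along \eqref{eq:full-order-dyn} with $\bu=\bk(\bx)$, using \eqref{eq:h-dot} and writing $\bv=\bpsi(\bx)=\bkappa(\bpi(\bx))+\tilde{\bv}$ with $\tilde{\bv}\coloneqq \bpsi(\bx)-\bkappa(\bpi(\bx))$. Substituting the ISSf-CBF inequality \eqref{eq:issf-rom} gives a lower bound on $L_{\bf}h+L_{\bg}h\bkappa$, and the two cross terms $L_{\bg}h(\by)\tilde{\bv}$ and $\nabla h(\by)\cdot\bd$ I would dominate by Young's inequality:
\begin{equation*}
L_{\bg}h(\by)\tilde{\bv}\geq -\tfrac{1}{\varepsilon}\|L_{\bg}h(\by)\|^{2}-\tfrac{\varepsilon}{4}\|\tilde{\bv}\|^{2},
\end{equation*}
\begin{equation*}
\nabla h(\by)\cdot\bd\geq -\tfrac{1}{\sigma}\|\nabla h(\by)\|^{2}-\tfrac{\sigma}{4}\|\bd\|^{2}.
\end{equation*}
The compensating terms in \eqref{eq:issf-rom} cancel the $\|L_{\bg}h\|^{2}$ and $\|\nabla h\|^{2}$ contributions, leaving $\dot{h}\geq -\alpha h(\by)-\tfrac{\varepsilon}{4}\|\tilde{\bv}\|^{2}-\tfrac{\sigma}{4}\delta^{2}$. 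Invoking \eqref{eq:lyap1} to replace $\|\tilde{\bv}\|^{2}\leq V(\bx)/\rho$ and \eqref{eq:lyap2} to upper bound $\dot V$, the gain condition \eqref{eq:gain-conditions} exactly cancels the remaining $V$ terms and yields
\begin{equation*}
\dot B(\bx)\geq -\alpha B(\bx)-\tfrac{\sigma}{4}\delta^{2}-\tfrac{\iota}{\mu}.
\end{equation*}
Evaluating at any $\bx\in\partial\mc{S}_{\delta}$ where $B(\bx)=-\tfrac{1}{\alpha}\bigl(\tfrac{\sigma}{4}\delta^{2}+\tfrac{\iota}{\mu}\bigr)$ gives $\dot{B}(\bx)\geq 0$, which is precisely the tangent-cone condition for the super-level set $\mc{S}_{\delta}$.

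The main obstacle is twofold. First, the algebra must be arranged so that the ISSf compensators in \eqref{eq:issf-rom} exactly neutralize the Young's-inequality residuals and so that the gain condition \eqref{eq:gain-conditions} collapses the mixed $V/h$ terms into a clean $-\alpha B$ on the right-hand side; this is what motivates the particular shape of $B$ in \eqref{eq:B} and the offset in the definition of $\mc{S}_{\delta}$. Second, one should check the regularity/contingent-cone characterization at boundary points that lie in the intersection $\partial\mc{S}_{\delta}\cap\partial\Omega_{\beta}$, where the contingent cone is the intersection of the two half-spaces; fortunately, the bounds $\dot{B}\geq 0$ and $\dot{V}<0$ both hold there, so the vector field satisfies both half-space constraints simultaneously and Nagumo's theorem applies to conclude forward invariance of $\mc{W}$.
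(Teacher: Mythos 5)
Your proposal is correct and takes essentially the same route as the paper's proof: Nagumo's theorem applied to $\mc{W}$ with the boundary split between $\partial\mc{S}_{\delta}$ and $\partial\Omega_{\beta}$, Young's inequality to absorb the cross terms into the ISSf compensators of \eqref{eq:issf-rom}, the bound \eqref{eq:lyap1} to convert $\|\bpsi(\bx)-\bkappa(\bpi(\bx))\|^2$ into $V(\bx)/\rho$, and the gain condition \eqref{eq:gain-conditions} to eliminate the residual $V$ term (the paper packages the constant offset into a shifted barrier $B_{\delta}$, but that is only cosmetic). The one detail worth tightening is that your bound on $\partial\mc{S}_{\delta}$ is actually strict, $\dot{B}>0$, because \eqref{eq:issf-rom} is a strict inequality; the paper uses exactly this strictness to argue that zero is a regular value of $B_{\delta}$ (mirroring Lemma \ref{lemma:beta-regular-value}), which is what justifies writing the contingent cone of $\mc{W}$ as the intersection of the two half-spaces at corner points --- a step you flag but only verify for the $\Omega_{\beta}$ constraint.
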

Note that if $\bd\equiv\bzero$ and $\iota=0$ then $\mc{S}\cap\Omega_{\beta}$ is forward invariant and that for a fixed interface $\bk$ satisfying \eqref{eq:lyap2}, it is always possible to satisfy \eqref{eq:gain-conditions} by decreasing both $\alpha$ and $\varepsilon$. Theorem \ref{theorem:main} states that the intersection of $\Omega_{\beta}$ and an \emph{inflated} safe set $\mc{S}_{\delta}\supseteq\mc{S}$ is rendered forward invariant, where the inflation is proportional to $\delta$, the bound on $\bd$ from \eqref{eq:d-X}, and $\iota$, which characterizes the bound on $\|\bpsi(\bx) - \bkappa(\bpi(\bx))\|$ in Lemma \ref{lemma:simulation}. The size of this inflation can be shrunk by decreasing $\sigma$, which is decoupled from \eqref{eq:gain-conditions}, and by increasing $\alpha$ and $\mu$, which are coupled to \eqref{eq:gain-conditions}. Importantly, while there may exist points along a given trajectory such that $B(\bx(t))\leq0$, this does not necessarily imply violation of the state constraint from \eqref{eq:C} since $h(\bpi(\bx))\geq B(\bx)$.

\begin{proof}
    Define:
    \begin{equation}\label{eq:B-delta}
        B_{\delta}(\bx) \coloneqq B(\bx) + \frac{1}{\alpha}\left(\frac{\sigma}{4}\delta^2 + \frac{\iota}{\mu} \right),
    \end{equation}
    noting that $\mc{S}_{\delta}$ from \eqref{eq:S2} is the zero superlevel set of $B_{\delta}$.    Computing the derivative of $B_{\delta}$ along the closed-loop full-order dynamics yields:
    \begin{equation*}
        \begin{aligned}
            \dot{B}_{\delta}(\bx) 
            = & L_{\bf}h(\by) +L_{\bg}h(\by)\bv + \nabla h(\by)\cdot \bd - \frac{1}{\mu}\dot{V}(\bx) \\
            = & L_{\bf}h(\by) +L_{\bg}h(\by)\bkappa(\by) + \nabla h(\by)\cdot \bd \\ &+ L_{\bg}h(\by)(\bv-\bkappa(\by)) - \frac{1}{\mu} \nabla V(\bx)\cdot\bF(\bx,\bk(\bx)), \\
        \end{aligned}
    \end{equation*}
    where the first equality follows from \eqref{eq:B} and \eqref{eq:h-dot} and the second from adding zero. For notational brevity, $\by$ and $\bv$ are viewed as functions of $\bx$ via $\by=\bpi(\bx)$ and $\bv=\bpsi(\bx)$. Lower bounding the above on $\mc{W}$ using \eqref{eq:issf-rom} and \eqref{eq:lyap2} yields:
    \begin{equation*}
        \begin{aligned}
            \dot{B}_{\delta}(\bx) > & -\alpha h(\by) + \frac{1}{\varepsilon}\|L_{\bg}h(\by)\|^2 + \frac{1}{\sigma}\|\nabla h(\by)\|^2 \\ &- \|\nabla h(\by)\|\|\bd\| - \|L_{\bg}h(\by)\|\|\bv - \bkappa(\by)\| \\ 
            & + \frac{\lambda}{\mu}V(\bx) - \frac{\iota}{\mu}.
        \end{aligned}
    \end{equation*}
    By completing squares and lower bounding, we obtain:
    \begin{equation*}
        \begin{aligned}
        \dot{B}_{\delta}(\bx) > & -\alpha h(\by) - \frac{\varepsilon}{4}\|\bv - \bkappa(\by)\|^2 - \frac{\sigma}{4}\delta^2 + \frac{\lambda}{\mu}V(\bx) - \frac{\iota}{\mu} \\
        = & -\alpha B_{\delta}(\bx) + \frac{1}{\mu}\left(\lambda - \alpha \right)V(\bx) - \frac{\varepsilon}{4}\|\bv - \bkappa(\by)\|^2,
        \end{aligned}
    \end{equation*}
    where the equality follows from \eqref{eq:B} and \eqref{eq:B-delta}. Using \eqref{eq:lyap1}:
    \begin{equation*}
        \begin{aligned}
            \dot{B}_{\delta}(\bx) > -\alpha B_{\delta}(\bx) + \frac{1}{\mu}\left(\lambda - \alpha - \frac{\varepsilon\mu}{4 \rho} \right)V(\bx). \\
        \end{aligned}
    \end{equation*}
    Hence, provided \eqref{eq:gain-conditions} holds, we have:
    \begin{equation}\label{eq:B-delta-dot}
        \begin{aligned}
            \dot{B}_{\delta}(\bx) > -\alpha B_{\delta}(\bx), \quad \mc{\forall \bx\in\mc{W}}.
        \end{aligned}
    \end{equation}
    Now define $B_{\beta}(\bx)\coloneqq \beta - V(\bx)$.
    This function satisfies:
    \begin{equation*}
        \begin{aligned}
        \dot{B}_{\beta}(\bx) = & -\nabla V(\bx)\cdot \bF(\bx,\bk(\bx)) \\ \geq & \lambda V(\bx) - \iota
        = -\lambda B_{\beta}(\bx) + \lambda\beta - \iota > -\lambda B_{\beta}(\bx),
        \end{aligned}
    \end{equation*}
    $\forall \bx\in\mc{W}$, where the first inequality follows from \eqref{eq:lyap2} and the second from $\beta> \iota/\lambda$.
    With $B_{\delta}$ and $B_{\beta}$ we have that:
    \begin{equation*}
        \mc{W} = \{\bx\in\R^N\,:\,B_{\delta}(\bx)\geq 0 \wedge B_{\beta}(\bx) \geq 0\}.
    \end{equation*}
    Since \eqref{eq:B-delta-dot} holds with strict inequality, we have $\nabla B_{\delta}(\bx)\neq\bzero$ whenever $B_{\delta}(\bx)=0$ implying that zero is a regular value of $B_{\delta}$ (this can be shown using a similar argument to Lemma \ref{lemma:beta-regular-value}). Furthermore, since $V$ is a simulation function $\beta$ is a regular value of $V$ by Lemma \ref{lemma:beta-regular-value}, which implies that $0$ is a regular value of $B_{\beta}$.
    Let $\act(\bx)\coloneqq \{i\in\{\beta,\delta\}\,:\,B_{i}(\bx) = 0 \}$ denote the set of active constraints of $\mc{W}$ and note that since zero is a regular value of $B_{\beta}$ and $B_{\delta}$, we have \cite[Ch. 4]{Blanchini}:
    \begin{equation*}
        \mc{T}_{\mc{W}}(\bx) = \{\bz\in\R^N\,:\,\nabla B_{i}(\bx)\cdot \bz \geq 0,\quad \forall i\in\act(\bx)\}.
    \end{equation*}
    When $B_{i}(\bx) = 0$ we have:
    \begin{equation*}
        \dot{B}_{i}(\bx) = \nabla B_{i}(\bx)\cdot \bF(\bx,\bk(\bx)) > 0,
    \end{equation*}
    ${\forall i\in\act(\bx)}$.
    Hence, for each ${\bx\in\partial\mc{W}}$ we have $\bF(\bx,\bk(\bx))\in\mc{T}_{\mc{W}}(\bx)$, which, by Theorem \ref{theorem:nagumo}, implies the forward invariance of $\mc{W}$, as desired. 
\end{proof} 

\begin{remark}
Constructing a ROM, simulation function, and corresponding interface is a challenging problem in general; however, various methods exist for certain classes of systems. For linear FOMs, \cite{PappasTAC00,GirardAutomatica09} provide systematic methods for constructing linear ROMs and corresponding simulation functions with linear interfaces \cite{GirardAutomatica09}. When the FOM \eqref{eq:full-order-dyn} is control affine, \cite{PappasTAC02} provides a method to construct control affine ROMs, and methods based on sum-of-squares programming \cite{ArcakARC22}, backstepping \cite{Schweidel}, feedback linearization \cite{FuACC13}, and differential flatness \cite{GirardECC13} may be used to construct simulation functions and corresponding interfaces. As demonstrated in the following section, the recently developed notion of zero dynamics policies \cite{ComptonCDC24} may also be leveraged to construct simulation functions and interfaces for highly underactuated systems. Any of these methods may be integrated with our ROM framework, although we emphasize that explicit expressions for both $V$ and $\bk$ are not necessary for \emph{implementation} of our approach (bounds on $V$ are required for \emph{verification} of the conditions in Theorem \ref{theorem:main}). When such conditions cannot be easily verified (e.g., when $V$ is unknown and $\bk$ is a black-box component in an existing control architecture), the conditions of Theorem \ref{theorem:main} may still be satisfied by initializing $\alpha$ and $\varepsilon$ very small to ensure safety and then increasing such parameters until adequate performance is achieved.
\end{remark}

\begin{example}\label{example:quadrotor-sims}
    Continuing Example \ref{example:quadrotor-intro}, suppose our objective is to drive the quadrotor to a goal while avoiding a cylindrical obstacle centered at $(x,y)=(x_o,y_o)\in\R^2$ with a radius of $r_o\in\R_{>0}$. This requirement leads to the state constraint $h(\bpi(\bx))\coloneqq\|\bpi(\bx) - (x_o,y_o)\|^2 - r_o^2$ for the FOM, which defines a state constraint set $\mc{C}$ as in \eqref{eq:C}. This state constraint is a valid input-to-state safe CBF \cite{AnilTCST23} for the single integrator ROM, which may be used to synthesize a controller $\bkappa$ satisfying \eqref{eq:issf-rom} using a quadratic program \cite{AmesTAC17} or a smooth safety filter \cite{CohenLCSS23}. This reduced-order controller is then refined to produce inputs for the FOM using an off-the-shelf tracking controller (e.g., that in \cite{LeeCDC10}) as an interface between the ROM and FOM. 
    Example trajectories of the quadrotor using this approach are illustrated in Fig. \ref{fig:quadrotor}, where the left plot displays the position of the quadrotor, which attempts to track velocities from the ROM for different choices of $\alpha$, and the right plot illustrates the difference between the reduced-order controller and the velocity of the quadrotor. As indicated by condition~\eqref{eq:gain-conditions} of Theorem \ref{theorem:main}, picking $\alpha$ too large may lead to safety violations (blue curve), whereas decreasing $\alpha$ allows such conditions to be satisfied for a fixed tracking controller and ensures safety (red and green curves). 
\end{example}

\begin{figure}
    \centering
    \includegraphics{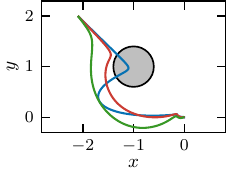}
    \hfill
    \includegraphics{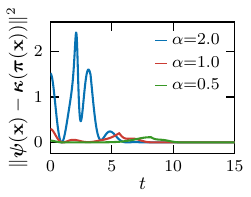}
    \vspace{-0.4cm}
    \caption{(\textbf{Left}) Evolution of the quadrotor's position for different choices of $\alpha$ in \eqref{eq:issf-rom}. (\textbf{Right}) Difference between quadrotor's planar velocity $\bv=\bpsi(\bx)$ and desired velocity $\bkappa(\bpi(\bx))$ generated by a single integrator. The velocity error remains bounded per Lemma \ref{lemma:simulation}, although the error is too large when $\alpha=2$, leading to safety violations. For each of these simulations we took $\varepsilon=20$ and omitted the $\sigma$ term since $\nabla h = L_{\bg}h$. Varying $\varepsilon$ had minimal effect on trajectories.}
    \label{fig:quadrotor}
\end{figure}

\section{Hardware Demonstrations}\label{sec:hardware}
We now illustrate our developments through their application\footnote{\texttt{https://vimeo.com/1010060590?share=copy}} to safety-critical control of ARCHER, a 3D hopping robot \cite{Ambrose,NoelICRA23,NoelIROS24}. 
The state of ARCHER is described by $\bx=(\bp,\bq,\dot{\bp},\bomega)\in\R^3\times\mathbb{S}^3\times\R^3\times\mathfrak{s}^3\eqqcolon\mc{X}$, where $\bp$ denotes the position of the robot, $\bq$ a unit quaternion representing its orientation, and $\bomega$ its angular rates. 
Overall, ARCHER is a high-dimensional, hybrid, underactuated system, which prohibits the use of traditional CBF synthesis methods. 
To overcome this, we leverage a ROM paradigm by designing a safety-critical controller for an abstracted version of ARCHER, the inputs of which are then refined for the full-order system using an existing high-performance interface \cite{ComptonCDC24,NoelIROS24}. To construct a ROM of ARCHER, we define our state and control projection maps as $\bpi(\bx)\coloneqq (x,y)\in\R^2$ and $\bpsi(\bx)\coloneqq (\dot{x},\dot{y})\in\R^2$, where $\bp=(x,y,z)$, so that the idealized reduced-order dynamics are:
\begin{equation*}
        \underbrace{
        \begin{bmatrix}
            \dot{x} \\ \dot{y}    
        \end{bmatrix}}_{\dot{\by}}
        =
        \underbrace{
        \begin{bmatrix}
            0 \\ 0
        \end{bmatrix}}_{\bf(\by)}
        +
        \underbrace{
        \begin{bmatrix}
            1 & 0 \\ 0 & 1
        \end{bmatrix}}_{\bg(\by)}
        \underbrace{
        \begin{bmatrix}
            \dot{x} \\ \dot{y}
        \end{bmatrix}}_{\bv}
        +
        \bd,
\end{equation*}
which corresponds to a disturbed single integrator evolving in the $(x,y)$ plane. The inputs of this ROM are related back to those of ARCHER using a zero dynamics policy \cite{ComptonCDC24,NoelIROS24} as an interface while this policy's associated Lyapunov function is used as a simulation function to certify adequate tracking of the ROM. Similar to Example \ref{example:quadrotor-sims}, our objective is to design a controller that avoids a collection of planar obstacles, each of which is captured by a safe set $\mc{C}_i$ and CBF $h_i$ as in Example \ref{example:quadrotor-sims}. We synthesize a safety-critical controller for this ROM using the safety filter:
\begin{equation*}
\begin{aligned}
    \bm{\kappa}(\by) \coloneqq && \argmin_{\bv\in\R^2} & \quad \tfrac{1}{2}\|\bv - \bm{\kappa}_{\rm{d}}(\by)\|^2 \\ 
    && \mathrm{s.t.} & \quad \nabla h(\by)\cdot \bv \geq - \alpha h(\by) + \frac{1}{\varepsilon}\|\nabla h(\by)\|^2,
\end{aligned}
\end{equation*}
where $\bm{\kappa}_{\rm{d}}\,:\,\R^2\rightarrow\R^2$ is a desired reduced-order input, and $h$ combines each individual CBF into a single CBF \cite{TamasLCSS23}. For our demonstration, $\bm{\kappa}_{\rm{d}}$ corresponds to desired velocity commands given via joystick that attempt to drive the hopper from one location to another without accounting for safety. This unsafe velocity is then passed to the above safety filter, which outputs a safe velocity command for ARCHER that is tracked by the corresponding interface $\bk$. The results of applying this interface to ARCHER are illustrated in Fig. \ref{fig:hopper-position} and Fig. \ref{fig:hopper-velocity}. As shown in Fig. \ref{fig:hopper-position}, the resulting full-order trajectory is safe -- it avoids obstacles at all times as indicated by the positivity of $h(\bm{\pi}(\bx(t)))$ for all $t\geq0$. The safe velocity commands generated by the ROM and the velocities achieved by the full-order system are illustrated in Fig. \ref{fig:hopper-velocity}, where the difference between the true velocities $\bpsi(\bx)$ and desired velocities $\bm{\kappa}(\bm{\pi}(\bx))$ are bounded per Lemma \ref{lemma:simulation}. The fact that the full-order model cannot track the commanded velocities exactly is compensated for by using a relatively small $\alpha=0.4$ to satisfy the conditions of Theorem \ref{theorem:main}.

\begin{figure}
    \centering
    \includegraphics{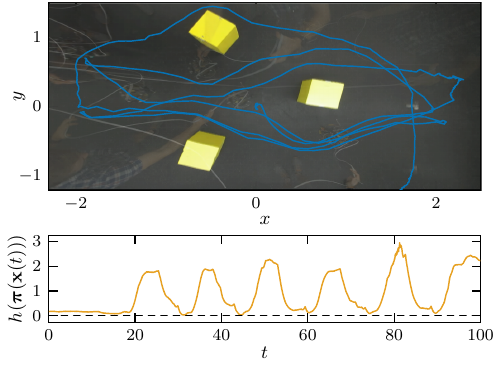}
    \vspace{-0.4cm}
    \caption{(\textbf{Top}) Evolution of ARCHER's position where the 
    yellow cubes
    denote the obstacles. (\textbf{Bottom}) Evolution of the ROM's CBF $h$ along the trajectory of the full-order system, which remains positive for all time.}
    \label{fig:hopper-position}
\end{figure}

\begin{figure}
    \centering
    \includegraphics[]{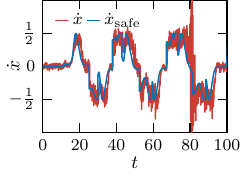}
    \hfill
    \includegraphics[]{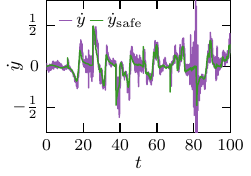}
    \vspace{-0.8cm}
    \caption{Commanded velocities output by the reduced-order safety filter $\bm{\kappa}(\bm{\pi}(\bx))=(\dot{x}_{\rm{safe}}, \dot{y}_{\rm{safe}})$ and the velocities of the full-order system $(\dot{x},\dot{y})$.}
    \label{fig:hopper-velocity}
\end{figure}

\section{Conclusions}\label{sec:conclusions}
We presented a framework for safety-critical control with CBFs and ROMs in which the relation between full and reduced-order models is characterized using simulation functions. This paradigm facilitates the use of highly simplified models for safety-critical control design while still providing safety guarantees for the original full-order model. Our theoretical developments were illustrated in both simulation and via hardware demonstrations on a hopping robot.

\bibliographystyle{ieeetr}
\bibliography{biblio}

\end{document}